\documentclass[sigconf]{acmart}

\usepackage{booktabs} % For formal tables
\usepackage{todonotes} % While paper is being written
\settopmatter{printacmref=false, printccs=true, printfolios=true}

\newtheorem{algorithm}{Algorithm}

\newlength\Colsep
\setlength\Colsep{10pt}

% Remove ACM copyright block per
% https://tex.stackexchange.com/questions/346292/remove-conference-information-from-acm-2017-sigconf-template
\settopmatter{printacmref=false} % Removes citation information below abstract
\renewcommand\footnotetextcopyrightpermission[1]{} % removes footnote with conference information in first column
\pagestyle{plain} % removes running headers

\begin{document}
\title[Defining Districts via Stable Matching]{Defining Equitable Geographic Districts in Road Networks via Stable Matching}

%\titlenote{Produces the permission block, and
%  copyright information}
%\subtitle{Extended Abstract}
%\subtitlenote{The full version of the author's guide is available as
%  \texttt{acmart.pdf} document}

\author{David Eppstein}
\affiliation{%
  \institution{University of California, Irvine}
}
\email{eppstein@uci.edu}

\author{Michael T. Goodrich}
\affiliation{%
  \institution{University of California, Irvine}
}
\email{goodrich@uci.edu}

\author{Doruk Korkmaz}
\affiliation{%
  \institution{University of California, Irvine}
}
\email{dkorkmaz@uci.edu}

\author{Nil Mamano}
\affiliation{
  \institution{University of California, Irvine}
}
\email{nmamano@uci.edu}

% The default list of authors is too long for headers}
\renewcommand{\shortauthors}{D. Eppstein et al.}

\begin{abstract}
We introduce a novel method for defining geographic districts in
road networks using stable matching.  In this approach, each
geographic district is defined in terms of a \emph{center}, which
identifies a location of interest, such as a post office or polling
place, and all other network vertices must be labeled with the
center to which they are associated.  We focus on defining geographic
districts that are \emph{equitable}, in that every district has the
same number of vertices and the assignment is stable in terms of
geographic distance.  That is, there is no unassigned vertex-center
pair such that both would prefer each other over their current assignments.
We solve this problem using a version of the classic stable matching
problem, called \emph{symmetric stable matching}, in which the
preferences of the elements in both sets obey a certain symmetry.
In our case, we study a graph-based version of stable matching in
which nodes are stably matched to a subset of nodes denoted as centers,
prioritized by their shortest-path distances, so that each center
is apportioned a certain number of nodes. We show that, for a planar
graph or road network with $n$ nodes and $k$ centers, the problem
can be solved in $O(n\sqrt{n}\log n)$ time, which improves upon the
$O(nk)$ runtime of using the classic Gale--Shapley stable matching algorithm
when $k$ is large. Finally, we provide experimental
results on road networks
for these algorithms and a heuristic algorithm that performs better
than the Gale--Shapley algorithm for any range of values of $k$.  
\end{abstract}

%
% The code below should be generated by the tool at
% http://dl.acm.org/ccs.cfm
% Please copy and paste the code instead of the example below. 
%
\begin{CCSXML}
<ccs2012>
  <concept>
    <concept_id>10002951.10002952.10002953.10010146.10010818</concept_id>
    <concept_desc>Information systems~Network data models</concept_desc>
    <concept_significance>500</concept_significance>
  </concept>
  <concept>
    <concept_id>10003752.10003809.10003635.10010037</concept_id>
    <concept_desc>Theory of computation~Shortest paths</concept_desc>
    <concept_significance>300</concept_significance>
  </concept>
</ccs2012>
\end{CCSXML}

\ccsdesc[500]{Information systems~Network data models}
\ccsdesc[300]{Theory of computation~Shortest paths}

\keywords{road networks, stable matching, 
          geographic districting}

\maketitle

% !TEX root = main.tex

\section{Introduction}

\emph{Location analysis} is a classical branch of optimization in geographic information systems, concerned both with \emph{facility location}, the placement of centers to serve geographic regions such as polling places, fire stations, or post offices, and the \emph{assignment problem}, the problem of surrounding these facilities by service regions in an optimal way, so that all points are equitably served by nearby facilities and each facility bears a fair portion of the total service load.
This problem includes, for instance, the special case of 
\emph{political districting} in which the requirements for fairness (avoiding unfair gerrymandered districts) include both geographic compactness and 
equal representivity with respect to the broader population.
(E.g., see~\cite{niemi_deegan_1978,REVELLE20051,Ricca2008voronoi}.)

In this work, we consider a geographic abstraction of the assignment problem in which the facility locations have already been determined through some other 
algorithm. 
We model the geographic space of interest as a weighted, undirected graph representing a road network, we model the population to be assigned to facilities as the set of all vertices of the graph, and we model the facility locations as a subset of $k$ chosen \emph{center} nodes of the graph. Each center has a \emph{quota} indicating how many nodes it should match.
The desired output is an assignment of every node to a center, such that the set of assigned nodes for each center equals its quota. The use of quotas in this way allows each of the facilities to have different operational capacities in terms of how much of the population they can serve.

We impose the conditions that each node has a preference for centers 
ordered by shortest-path distance from the node, 
and each center has a preference for nodes 
ordered by their distances from the center.
Our goal is to match each center to its quota number of nodes and for the matching to be \emph{stable}, meaning that no node and center that are not assigned to each other prefer each other to their specified matches. Rather than optimizing some computationally challenging global quality criterion, we seek an assignment of nodes to centers that is stable. 

\begin{figure*}[hbt!]
\begin{tabular}{cc}
  \includegraphics[width=0.4\linewidth]{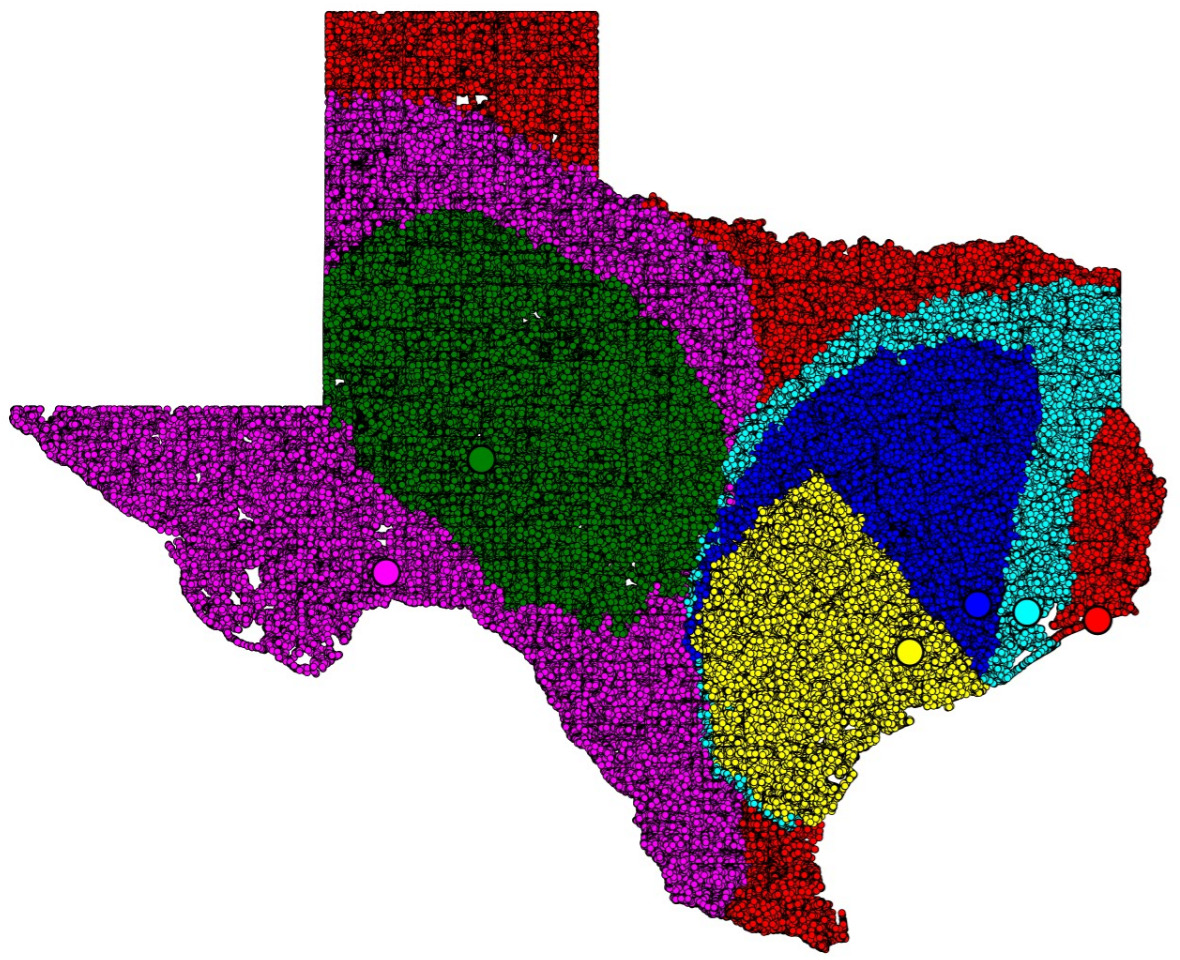} &   \includegraphics[width=0.4\linewidth]{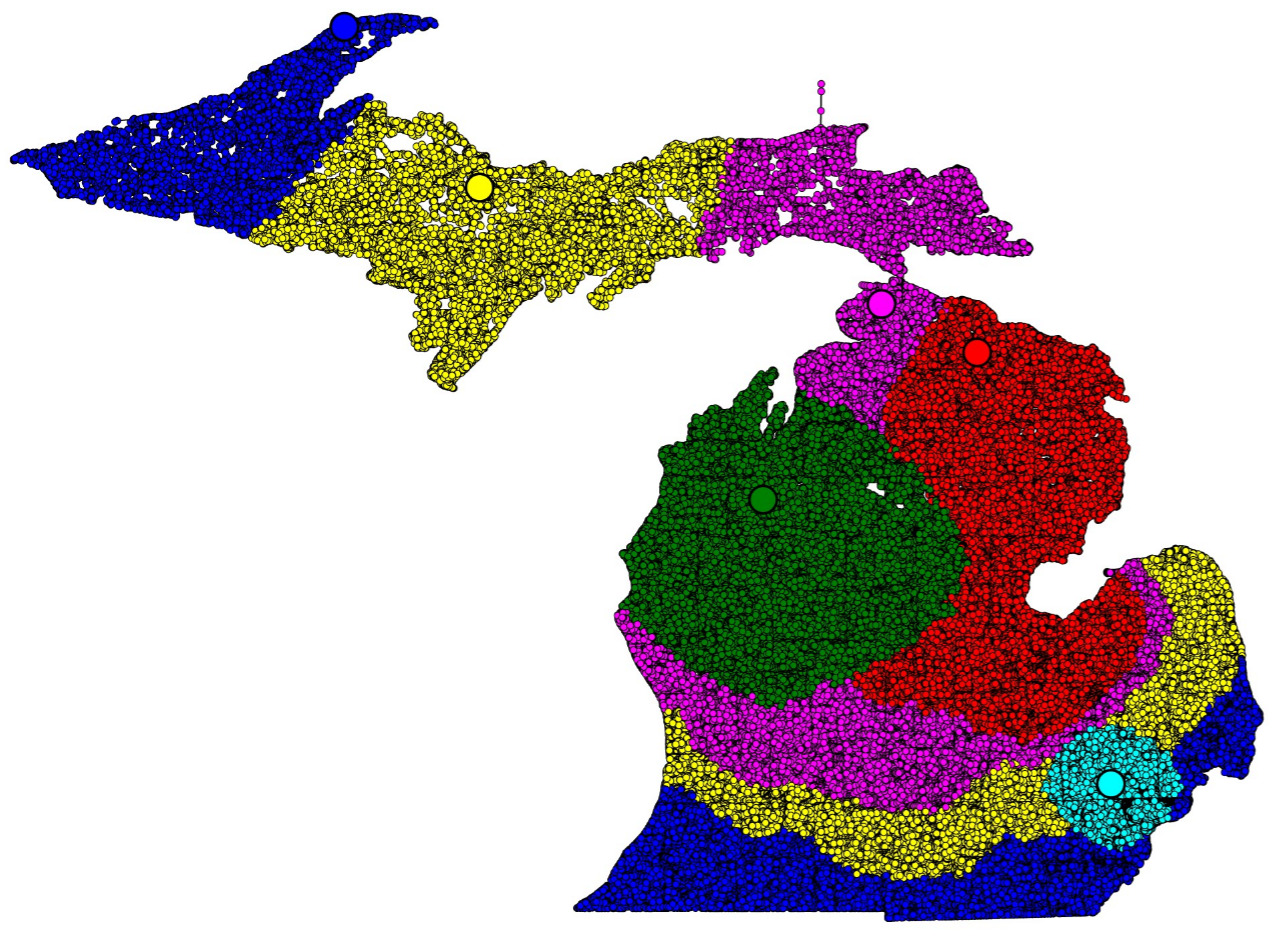} \\    Texas ($n=2037K,m=2550K$) & Michigan ($n=662K, m=833K$) \\[6pt]
 \includegraphics[width=0.45\linewidth]{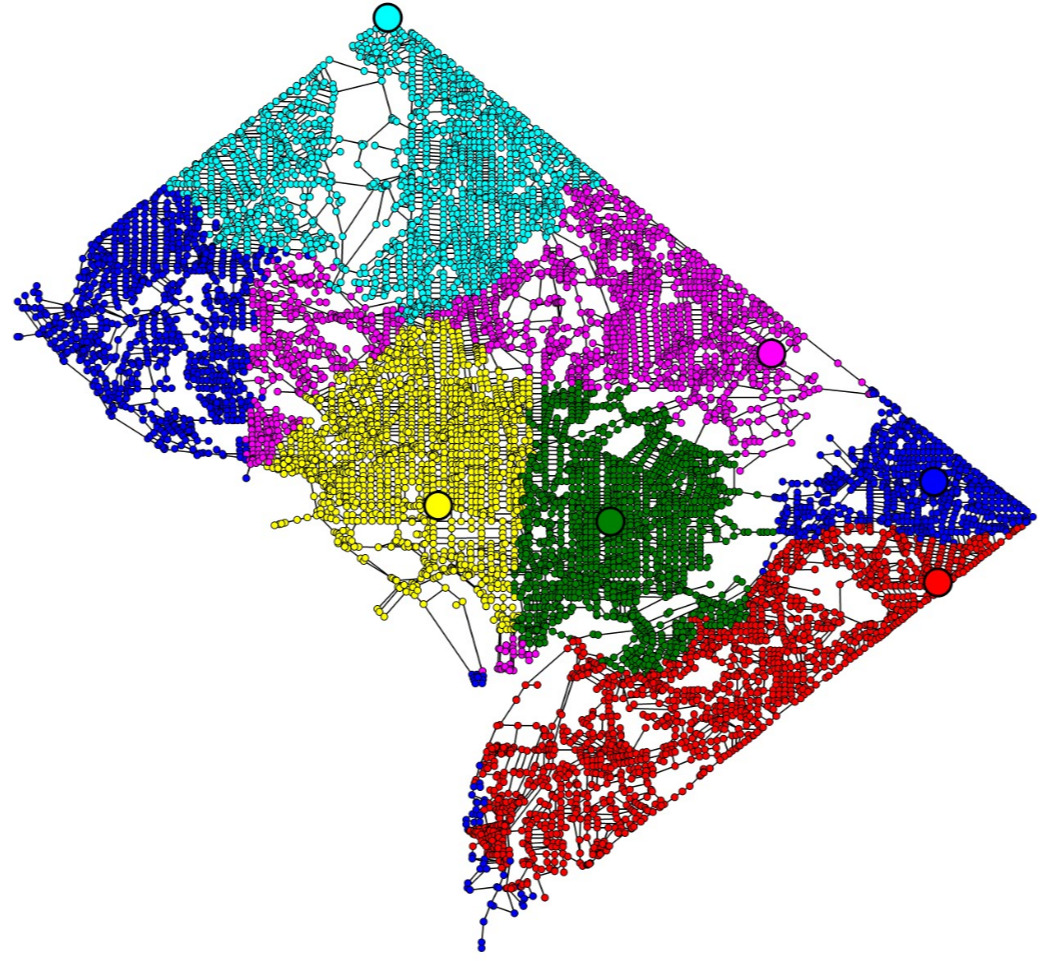} &   \includegraphics[width=0.28\linewidth]{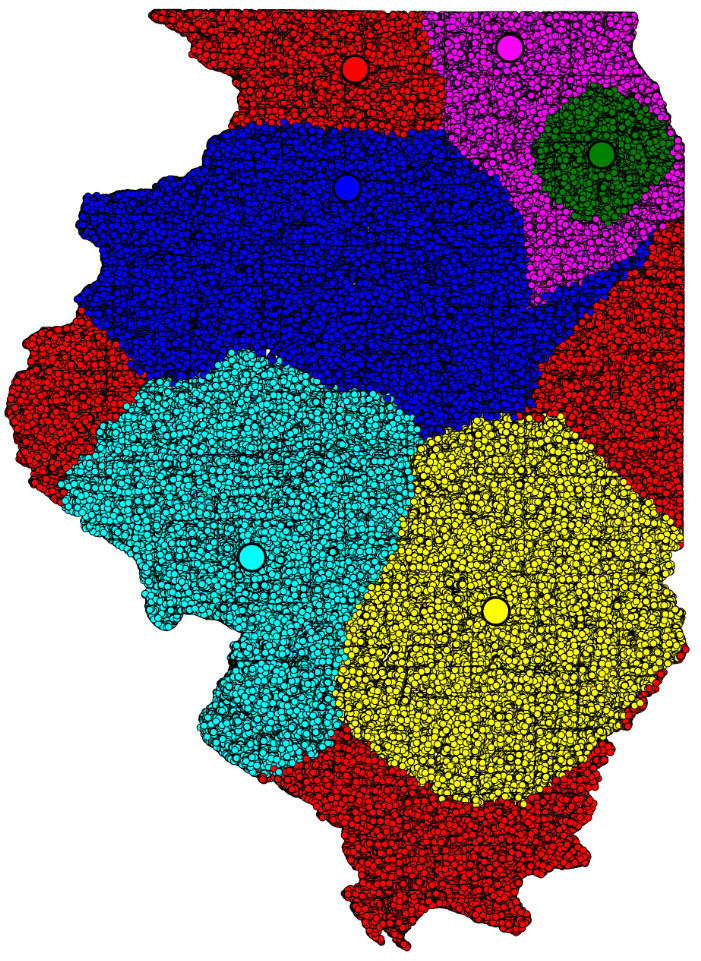} \\    Washington, DC ($n=9522,m=14850$) & Illinois ($n=790K,m=1008K$) \end{tabular}
\caption{The solutions to 
the \emph{stable graph matching} problem for the 2010 road networks of three 
U.S.~states and the District of Columbia, from the DIMACS database~\cite{DIMACS}. They consists of primary and secondary roads in the biggest connected component of the road networks. In each case, $n$ and $m$ denote the number of nodes and edges, respectively, and there are $k=6$ random centers with equal quota $n/k$.}
\label{fig:samplemaps}
\end{figure*}

We are offering this combined notion of giving each center a quota and 
optimizing stability in terms of distance-based preferences as a type of
\emph{equitability} for defining geographic districts. Quotas 
provide fairness in terms of the number of nodes assigned to each center and 
stability provides fairness in terms of how those nodes are assigned.

Defining geographic districts
that are equitable implies a certain amount of ``compactness'' for districts,
which avoids the types of highly non-compact districts that have been the
subject of recent legal cases involving gerrymandering.
This is a characteristic of our use of stable matching for assigning
nodes to centers based on symmetric distance-based preferences.
This notion does not, however,
imply that equitable districts are necessarily convex or even connected.
Indeed, depending on the placement of centers and how quotas 
are defined, it may be necessary for some districts to be disconnected.

Formally, we define the stable graph matching problem as follows:
\begin{definition}[Stable graph matching problem]
Given an undirected, weighted graph and a subset of $k$ nodes denoted centers, find an assignment from each node to a centers such that \textit{(i)} the same number of nodes is assigned to each center, up to round-off errors, and \textit{(ii)} the matching is stable with respect to shortest-path distances; that is, there is no node $u$ and center $c$ such that $u$ is not assigned to $c$, $u$ is closer to $c$ than to its assigned center, and $c$ is closer to $u$ than to one (any) of the nodes assigned to $c$. 
\end{definition}

Figure~\ref{fig:samplemaps} illustrates these properties for solutions to
our \emph{stable graph matching} problem for the 
road networks of three U.S.~states and the District of 
Columbia, with $k=6$ randomly-placed centers and equal quotas.

\subsection{New Results}
In the standard stable matching problem, preferences are arbitrary. 
Each individual may choose as his or her preferences any permutation of the opposite-set individuals, independently of all other choices. 
Preferences resulting from shortest-path distances in an undirected graph are 
not arbitrary, however. 
Instead, they obey a certain symmetry property 
coming from the undirected nature of the graph and shortest paths within the graph.
To capitalize on this idea, we define an abstract problem 
intermediate between stable graph matching and stable matching, which we call the
 \emph{symmetric stable matching problem}. 
Stable graph matching is a particular case of 
symmetric stable matching. We observe that for a symmetric stable matching of $n$ nodes with $k$ centers,  the classical Gale--Shapley algorithm can compute a solution in time $O(nk)$, once all distances between nodes and centers have been computed.

Moreover, we develop a novel \emph{nearest-neighbor chain algorithm} for any symmetric stable matching problem, using ideas borrowed from a very different application of nearest-neighbor chains, in hierarchical clustering problems~\cite{Ben-CAD-82,Jua-CAD-82}.
Our algorithm can be applied to stable graph matching and extends our previous work on \emph{stable grid matching}~\cite{EPPSTEIN2017}, another case of symmetric stable matching.
It runs in $O(n\cdot T(n))$ time, where $T(n)$ is the time per operation of a data structure
for updating a dynamic subset of points from the given metric space and answering nearest neighbors to these points. In the graph setting, this means that we need to be able to find the closest center of a node, and vice versa, efficiently. By using the data structure for road networks from~\cite{EPPSTEIN20173}, with $T(n)=O(\sqrt{n}\log n)$, the stable graph matching problem can be solved in $O(n\sqrt{n}\log n)$ time.

We summarize our contributions as follows.
\begin{itemize}
\item We formulate the stable graph matching problem and its generalization, the symmetric stable matching problem.
\item We describe a general class of algorithms for solving symmetric stable matching, the mutual closest pair algorithms, and prove that for symmetric stable matching (and unlike stable matching more generally) the solution is always unique.
\item We define and analyze the nearest-neighbor chain algorithm for symmetric
stable matchings. As we show, for inputs that can support nearest-neighbor
queries on dynamic subsets of the input preferences, with time $T(n)$ per update
or nearest-neighbor query, we can find a symmetric stable matching in time
$O(n\, T(n))$.
\item We provide a heuristic circle-growing improvement to the Gale--Shapley algorithm for the case of stable graph matching. 
Our heuristic does not improve the $O(nk)$ worst-case time of the algorithm, but we expect it to provide significant speedups in practice.
\item We provide an experimental comparison of our algorithms on real-world road networks. 
Our experiments confirm the independence from $k$ of the running time of our nearest-neighbor based algorithm, and they also confirm the efficacy of our heuristic 
circle-growing improvement to the Gale--Shapley algorithm.
\end{itemize}

\subsection{Prior Related Work}
Our notion of equitability introduces 
an interesting new (and more realistic)
twist to Knuth's classic post office problem~\cite{knuth1998art}.
In the classic post office problem, one is given a collection of sites
called ``post offices'' and one is interested in assigning
nodes to their nearest post office with no consideration for quotas
characterizing the capacity of each post office to handle mail.
Thus, the classic post office problem is equivalent to
our geographic districting problem with unbounded quotas.
Knuth's discussion of the classic post office problem has given rise to 
a long line of research on spatial partitioning, including the important Voronoi diagrams (e.g., see~\cite{Aurenhammer:1991}),
which have also been extended to the graph setting~\cite{Erwig2000}.

The \emph{stable matching problem}, which is also known
as the \emph{stable marriage problem}, 
was introduced by Gale and Shapley~\cite{gale62}.
This problem was 
originally described in terms of matching $n$ men and $n$ women 
based on each person having an ordered preference list for the members
of the opposite sex in this group.
In that context, stability means that no man--woman pair prefer 
each other to their assigned choices. 
Stability, defined in this way, is a necessary condition (and more important than, e.g., total utility) in order to prevent extramarital affairs.
When generalized to the one-to-many case, this problem
is also called the \emph{college admission problem}~\cite{Roth89}, 
because it models a setting where $n$ students are 
stably matched to $k<n$ colleges, each with a certain quota of admissions.
Indeed, solutions to this one-to-many stable matching
problem are currently used to match medical students to residency programs in some countries, such as the US.

The standard algorithm used today
for computing a stable matching with arbitrary preferences 
is the original Gale--Shapley algorithm~\cite{gale62}.
This algorithm finds a stable matching in time $O(nk)$ 
in the one-to-many (college admission) case or in $O(n^2)$ time in the one-to-one 
(men and women) case. 
Moreover, these running 
times are the best possible for arbitrary preferences, since just
reading the input in the one-to-many case requires $\Omega(nk)$ time
and $\Omega(n^2)$ time in the one-to-one case.
Intuitively, the one-to-one version of their algorithm 
involves each man making proposals to women according to his 
preference order and each
woman accepting a proposal if it is her first or if it offers her a better
match based on her preference order.

Existing research about stable matching studies variations such as matching with added constraints~\cite{KOJIMA2010}, preferences with ties~\cite{IRVING1994}, and many more, e.g., see~\cite{IWAMA2008,MANLOVE2002}. 
However, the assumption that preferences are arbitrary has rarely been challenged. A first step in this direction was taken by~\citet{hoffman2006}, who considered the mathematical properties of a stable matching in a geometric setting, where ``colleges'' are points in $\mathbb{R}^2$ and ``students'' are all the points in $\mathbb{R}^2$, and both use distances as preferences. 
Eppstein {\it et al.}~\cite{EPPSTEIN2017} extended their approach to images, where
``students'' and ``colleges'' are pixels, but their work does not extend to
general graphs and road networks.

\section{Symmetric stable matching}\label{sec:ssm}
We present the symmetric stable matching problem in the one-to-many context of schools and students, and therefore all the results in this section also apply to the one-to-one case of men and women.

In order to formulate the symmetric stable matching problem, consider this alternative but equivalent definition of the stable matching problem. Each agent (school or student) gives a unique score to each agent from the other set, and ranks them in increasing order of these scores. Therefore, a set of scores such as $(a\gets 7, b\gets 2, c\gets 10)$ corresponds to the list of preferences $(b,a,c)$.

We call the preferences \emph{symmetric} if the score of $x$ for $y$ equals the score of $y$ for $x$. Moreover, in this case, we call these scores \emph{distances}.

\begin{definition}
A stable matching problem is \emph{symmetric} if the preferences are symmetric.
\end{definition}

\subsection{Mutual closest pair algorithm}

Before introducing our nearest-neighbor chain algorithm for symmetric stable matching, we describe a simplified version of it, the mutual closest pair algorithm.

\begin{definition}
In a stable matching problem, a \emph{mutual closest pair} is a school and a student who have each other as first choice.
\end{definition}

The algorithm is based on the following lemma:

\begin{lemma}\label{lem:mutual}
If preferences are symmetric, a mutual closest pair always exists.
\end{lemma}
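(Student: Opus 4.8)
The plan is to exhibit a mutual closest pair explicitly, by selecting the globally closest school--student pair and showing that symmetry alone forces this pair to be each other's first choice.

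First I would range over all pairs $(x,y)$ in which $x$ is a school and $y$ is a student, and let $(x^*,y^*)$ be a pair minimizing the distance $d(x,y)$. Since there are finitely many schools and students, this minimum is attained, and because each agent assigns distinct scores to the members of the opposite set, I would note that the minimizer can be taken to be well-defined (and, as the argument will show, unique). The value $d(x^*,y^*)$ is then the smallest entry appearing anywhere in the symmetric distance array.

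Next I would verify the two first-choice conditions separately. For any other student $y$, minimality gives $d(x^*,y)\ge d(x^*,y^*)$, so among all students $y^*$ has the smallest score from $x^*$'s viewpoint and is therefore ranked first by $x^*$. Symmetry then yields the matching statement from the other side: for any school $x$ we have $d(y^*,x)=d(x,y^*)\ge d(x^*,y^*)=d(y^*,x^*)$, where the first equality is exactly the symmetry hypothesis and the inequality is again global minimality. Hence $x^*$ is the closest school to $y^*$ and is ranked first by $y^*$, so $(x^*,y^*)$ is a mutual closest pair.

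I do not expect a genuine obstacle here; the entire force of the proof is the observation that symmetry lets a single globally minimal distance serve simultaneously as the minimum of its row and of its column of the distance array. The only point requiring care is that ``first choice'' be unambiguous, which is precisely what the assumption of distinct scores within each agent's list provides: with that in hand the inequalities above are strict for every competing partner, so $(x^*,y^*)$ is in fact the unique mutual closest pair.
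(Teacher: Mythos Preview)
Your core argument is correct and is exactly the paper's proof: pick a school--student pair realizing the global minimum distance and observe that by minimality (and symmetry) each is the other's first choice.

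One side remark is wrong, though. Your final sentence asserts that $(x^*,y^*)$ is ``the unique mutual closest pair,'' but the strict inequalities you derive only show that $x^*$ is strictly the nearest school to $y^*$ and vice versa; they say nothing about other pairs. Indeed, the paper explicitly notes right after this lemma that there can be additional mutual closest pairs whose distance is not the global minimum. Similarly, the assumption that each agent assigns \emph{distinct} scores to members of the opposite set does not rule out $d(x_1,y_1)=d(x_2,y_2)$ when $x_1\ne x_2$ and $y_1\ne y_2$, so even the global minimizer need not be unique. None of this affects the existence claim the lemma actually asks for, so simply drop the uniqueness assertions.
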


\begin{proof}
Let $s$ and $x$ be the student and school whose distance is the global minimum, that is,
no other school and student are closer to each other than $s$ and $x$.
Then, $s$ and $x$ are a mutual closest pair: $s$ is the closest student to $x$, and $x$ is the closest school to $s$.
\end{proof}

Although the pair realizing the global minimum distance are always a mutual closest pair, the reverse is not true: there can be other mutual closest pairs whose distance is not a global minimum.
Moreover, Lemma~\ref{lem:mutual} and its proof require that the distances on which we are basing preferences be symmetric. If they are not symmetric, as may be the case for shortest path distances in a directed graph, then there might not be any mutual closest pairs.

Now we describe our symmetric stable matching algorithm:
\begin{algorithm}\label{alg:mutual}\normalfont{Mutual closest pair algorithm}
\begin{description}
\item[Input:] $n$ students and $m$ schools with symmetric preferences, and school quotas adding up to $n$.
\item[Output:] a stable matching between the students and schools.
\end{description}
\begin{enumerate}
\item Initialize the matching empty.
\item Repeat while there is an unmatched student:
\begin{enumerate}
\item Find a mutual closest pair $s,x$.
\item Match $s$ and $x$, remove the student from the pool of unmatched students, reduce the quota of the school by one and remove it from the pool of unmatched schools if its the quota reached zero.
\end{enumerate}
\end{enumerate}
\end{algorithm}
Due to Lemma~\ref{lem:mutual}, the algorithm will never fail to find a closest mutual pair. Next, we prove that the resulting matching is stable, that is, that there are no blocking pairs (a student and a school that are not matched to each other but prefer each other to their assigned choices).

\begin{theorem}
\label{thm:alg-mutual}
Algorithm~\ref{alg:mutual} finds a stable solution to any symmetric stable matching problem.
\end{theorem}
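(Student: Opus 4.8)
The plan is to argue by contradiction. Suppose the matching $M$ produced by Algorithm~\ref{alg:mutual} contains a \emph{blocking pair}: a student $u$ and a school $c$ not matched to each other such that $u$ prefers $c$ to its assigned school and $c$ prefers $u$ to at least one of its assigned students. Let $x'$ be the school $u$ is matched to in $M$ (this exists, since the quotas sum to $n$ and the loop of step (2) runs until every student is matched), and let $s'$ be an assigned student of $c$ that $c$ ranks below $u$. Because scores are distinct, all preferences are strict, so the blocking condition reads $d(u,c) < d(u,x')$ and $d(c,u) < d(c,s')$. The invariant I would first record is that whenever step (2a) selects a mutual closest pair $(s,x)$, each of $s,x$ is the closest to the other \emph{among the elements still in the pool at that moment}; that a matched student is removed permanently and is never re-matched; and that the pool of unmatched students and the pool of schools with remaining quota only shrink over time. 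Lemma~\ref{lem:mutual} applied to the (still symmetric) residual instance guarantees a mutual closest pair always exists, so the loop never stalls and terminates with every student matched.

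Next I would pin down the relative timing of the two relevant matching events. Consider the step at which $u$ was matched to $x'$. At that step the pair $(u,x')$ was a mutual closest pair, so $x'$ was $u$'s closest \emph{available} school. Since $d(u,c) < d(u,x')$, school $c$ cannot have been available at that step, for otherwise $x'$ would not have been $u$'s closest available school. Hence $c$ had already exhausted its quota before $u$ was matched, which means that all of $c$'s assigned students, and in particular $s'$, were matched to $c$ \emph{strictly earlier} than $u$ was matched. Consequently, at the moment $s'$ was matched to $c$, the student $u$ was still unmatched and therefore still in the pool.

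Then I would close the argument: the pair $(s',c)$ was itself chosen as a mutual closest pair, so $s'$ was $c$'s closest available student at that moment. But $u$ was available at that moment and $d(c,u) < d(c,s')$, so $u$ was strictly closer to $c$ than $s'$, contradicting that $s'$ was $c$'s closest available student. This contradiction shows that no blocking pair can exist, so $M$ is stable.

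I expect the only delicate point to be the bookkeeping of availability over time: making precise that ``closest available'' is always evaluated against the shrinking pool, that a once-matched student never returns, and that $c$ becoming full before $u$'s match forces every student of $c$ (in particular $s'$) to predate $u$'s match. Everything else reduces to reading off the two inequalities defining the blocking pair and comparing them against the greedy choice made at the two matching steps. Notably, symmetry enters only through Lemma~\ref{lem:mutual}, which keeps the algorithm from stalling; the stability argument itself uses only the ``mutual closest at time of matching'' invariant.
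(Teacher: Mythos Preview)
Your proof is correct and follows essentially the same line as the paper's own argument: from $u$ preferring $c$ to its match $x'$ you deduce that $c$ was already full when $u$ was matched, hence every student assigned to $c$ (in particular $s'$) was matched earlier while $u$ was still available, and the mutual-closest-pair property at the moment $s'$ was matched forces $d(c,s')<d(c,u)$, contradicting the blocking condition. The only difference is packaging---you frame it as a contradiction using a specific witness $s'$, whereas the paper argues directly that all of $c$'s students are closer to $c$ than $u$---but the substance is identical.
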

\begin{proof}
Suppose student $s$ is matched to school $x$, but she prefers school $y$. When $s$ and $x$ were matched by the algorithm, $x$ was the closest school to $s$, so $y$ already had quota zero. Therefore, $y$ matched all of its students while $s$ was not matched yet. But $y$ was matched with students other than $s$, who must have been closer to $y$ than $s$. Hence, $s$ and $y$ are not a blocking pair.
\end{proof}

\begin{lemma}
\label{lem:unique}
If $s$ and $x$ are a mutual closest pair of a symmetric stable matching problem,
then every stable solution to the problem must match $s$ to $x$.
\end{lemma}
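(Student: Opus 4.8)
The plan is to argue by contradiction, exhibiting $(s,x)$ as a blocking pair of any stable matching that fails to pair them. Suppose, for contradiction, that $M$ is a stable solution in which $s$ is not matched to $x$. The whole argument will rest on unpacking what it means for $s,x$ to be a mutual closest pair: because they are each other's first choice, $x$ is the school $s$ prefers above all others, and $s$ is the student $x$ prefers above all others. I expect the proof to be short, since this extremal property is exactly what blocks any alternative assignment.

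First I would examine $s$'s side of $M$. In $M$, student $s$ is matched to some school $y \neq x$ (quotas sum to the number of students, so $s$ is matched, but the argument works verbatim if $s$ were unmatched). Since $x$ is the closest school to $s$, and scores are unique, $s$ strictly prefers $x$ to $y$. Next I would examine $x$'s side. Since $s$ is the closest student to $x$, school $x$ prefers $s$ to every other student; in particular, every student assigned to $x$ in $M$ differs from $s$, so $x$ prefers $s$ to each of them. (If $x$'s quota is not fully used in $M$, then $x$ has an open slot and trivially prefers $s$ to an empty assignment, so the same conclusion holds.)

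Putting the two observations together, $s$ prefers $x$ to its assigned school and $x$ prefers $s$ to at least one of its assigned students, so $(s,x)$ is a blocking pair of $M$. This contradicts the assumed stability of $M$, and therefore every stable solution must match $s$ to $x$.

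The argument is essentially immediate once the mutual-closest-pair property is stated in preference terms, so there is no substantive obstacle; the only point requiring a little care is the routine edge-case bookkeeping, namely confirming that $s$ is indeed matched (or handling the unmatched case) and covering the possibility that $x$'s quota is not completely filled. Both cases collapse into the same blocking-pair conclusion, so the contradiction goes through uniformly.
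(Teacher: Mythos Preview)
Your proof is correct and follows exactly the same approach as the paper: showing that if $s$ and $x$ are not matched to each other they constitute a blocking pair, contradicting stability. The paper's version is simply the one-sentence observation that such $s$ and $x$ ``would form an unstable pair,'' whereas you have spelled out the details and edge cases more carefully.
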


\begin{proof}
If $s$ and $x$ were each matched to someone other than each other, they would form an unstable pair.
\end{proof}

\begin{theorem}
Any symmetric stable matching problem has a unique solution, which will be found by any instance of Algorithm~\ref{alg:mutual}.
\end{theorem}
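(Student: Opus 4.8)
The plan is to prove the two assertions separately and then combine them: first that the stable solution is unique, and second that every run of Algorithm~\ref{alg:mutual} produces it. The uniqueness claim is the substantive part; once it is established, the statement about the algorithm follows almost immediately from Theorem~\ref{thm:alg-mutual}, since an algorithm that always returns a stable matching must return the unique one.

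For uniqueness I would induct on the number $n$ of students. The base case $n=0$ is vacuous: the empty matching is the only matching and has no blocking pair. For the inductive step, I would invoke Lemma~\ref{lem:mutual} to obtain a mutual closest pair $(s,x)$, and Lemma~\ref{lem:unique} to conclude that every stable solution must match $s$ to $x$. I then form the reduced instance $P'$ by deleting $s$, decrementing the quota of $x$ by one (and deleting $x$ entirely if its quota reaches zero), while keeping all remaining pairwise distances unchanged. Because distances are inherited, $P'$ is again symmetric, it has $n-1$ students, and its quotas sum to $n-1$, so the induction hypothesis applies to it.

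The heart of the argument is a bijection between the stable solutions of $P$ and those of $P'$, realized by adding or removing the forced pair $(s,x)$. In the forward direction I would show that if $M$ is stable in $P$ then $M'=M\setminus\{(s,x)\}$ is stable in $P'$, and in the reverse direction that if $M'$ is stable in $P'$ then $M=M'\cup\{(s,x)\}$ is stable in $P$. Both directions rest on two observations: that $x$ fills its quota in $P$ exactly when it fills its reduced quota in $P'$, so ``having spare capacity'' is preserved in both directions; and, crucially, that since $s$ is the first choice of $x$ (they form a mutual closest pair), $x$ can never prefer any other student to $s$. Consequently no pair involving $s$ can ever block, and for a pair $(s',y)$ with $s'\neq s$ the blocking condition in $P$ and in $P'$ coincide---the only delicate case being $y=x$, where a would-be blocker must be preferred by $x$ over one of the remaining students in $M'(x)$, a condition identical in both instances. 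Given this bijection, the unique stable solution of $P'$ guaranteed by induction lifts to a unique stable solution of $P$.

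Finally, for the algorithm I would observe that Algorithm~\ref{alg:mutual} never stalls: by Lemma~\ref{lem:mutual} a mutual closest pair exists whenever an unmatched student remains, and each iteration matches exactly one student, so every run halts after $n$ iterations with a complete matching. By Theorem~\ref{thm:alg-mutual} every such run is stable, and by the uniqueness just proved there is only one stable matching; hence every run outputs the same solution, regardless of which mutual closest pair it happens to select at each step. I expect the main obstacle to be the careful bookkeeping in the bijection at the case $y=x$---verifying that decrementing $x$'s quota neither creates nor destroys a blocking pair---and the mutual-closest-pair property of $(s,x)$, namely that $x$ prefers $s$ to every other student, is exactly the fact that makes this bookkeeping go through.
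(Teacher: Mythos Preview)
Your proposal is correct and follows essentially the same approach as the paper: an induction on the number of students that peels off a mutual closest pair (via Lemma~\ref{lem:mutual} and Lemma~\ref{lem:unique}) and reduces to a smaller symmetric instance. The only organizational difference is that the paper runs a single induction showing that any stable solution~$S$ coincides with the algorithm's output (yielding uniqueness and the algorithm claim simultaneously), whereas you first establish uniqueness via an explicit bijection between stable solutions of $P$ and $P'$ and then invoke Theorem~\ref{thm:alg-mutual}; your version is in fact more careful, since the paper asserts without proof that the restriction of $S$ to the smaller problem remains stable---precisely the forward direction of the bijection you spell out.
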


\begin{proof}
Let $S$ be any solution to the problem; at least one solution $S$ exists by \autoref{thm:alg-mutual}.
Let $s$ and $x$ be a mutual closest pair, which must exist by Lemma~\ref{lem:mutual}.
By Lemma~\ref{lem:unique}, $s$ and $x$ are matched to each other in $S$.
Because Algorithm~\ref{alg:mutual} is guaranteed to find a correct solution (\autoref{thm:alg-mutual} again), it necessarily matches $s$ with $x$, so its behavior on $s$ and $x$ agrees with solution $S$. These two elements $s$ and $x$ cannot form any mutual closest pairs with other elements, so if we remove both of them from the given problem, we obtain a smaller problem such that the restriction of $S$ to the smaller problem is still stable and such that the restriction of Algorithm~\ref{alg:mutual} to the smaller problem agrees with its behavior on the whole problem. The result follows by induction on the size of the problem. 
\end{proof}

Eeckhout~\cite{Eeckhout2001} stated a sufficient condition for a unique solution in one-to-one stable matchings. It can be shown that symmetric preferences satisfy this condition, and hence uniqueness also follows from their result in the one-to-one case.

Algorithm~\ref{alg:mutual} leaves open how to actually find a mutual closest pair, but any strategy that finds mutual closest pairs will work correctly. Although different strategies may choose the matches that they make in different orderings, and may take different running times, they will always produce the same overall matching. In the next section, we present one strategy for quickly finding mutual closest pairs by making use of a dynamic nearest-neighbor data structure. This data structure should be able to maintain a set of agents of the same type (students or schools) and answer queries asking for the closest one to a query agent of the opposite set. Moreover, it should support deletions, that is, allow to remove elements from the set.

\subsection{Nearest-neighbor chain algorithm}

The following algorithm, which we call the Nearest-neighbor chain algorithm, is based on the theory of hierarchical clustering~\cite{Ben-CAD-82,Jua-CAD-82}, and was first used in the context of stable matching (for grid-based geometric data only) in~\cite{EPPSTEIN2017}.

\begin{algorithm}\label{alg:chain}\normalfont{Nearest-neighbor chain algorithm}
\begin{description}
\item[Input:] $n$ students and $m$ schools with symmetric preferences, and school quotas adding up to $n$.
\item[Output:] a stable matching between the students and schools.
\end{description}
\begin{enumerate}
\item Initialize the matching empty.
\item Initialize a dynamic nearest-neighbor structure containing the students, and one containing the schools.
\item Initialize an empty stack $S$.
\item Repeat while there is an unmatched student:
\begin{enumerate}
\item If $S$ is empty, add any unmatched student (or school) to it.
\item Let $p$ be the agent at the top of the stack, and use the nearest-neighbor structures to find its nearest-neighbor $q$ of the opposite set.
\item If $q$ is not already in $S$, add it.
\item Otherwise, $q$ must be the second-from-top element in $S$ (as justified below), and $p$ and $q$ are a mutual closest pair. In this case, match $p$ and $q$, and update the data structures accordingly: remove the student from the nearest-neighbor structure of students, reduce the quota of the school by one and remove it from the nearest-neighbor structure of schools if its quota reached zero, and remove $p$ and $q$ from the stack. Note that if the school was below the student in the stack and it still had positive quota, it would be added to the stack again in the next iteration, as it would still be the nearest-neighbor of the previous student in the stack. Hence, in this case, we can keep the school in the stack.
\end{enumerate}
\end{enumerate}
\end{algorithm}
Note that the distance between consecutive elements in $S$ only decreases. That's why, in Step (4d), $q$ must be the second-from-top; if $q$ was anywhere else, $p$ would be closer to its predecessor in $S$ than to $q$. Here we are using the fact that the preferences of each element are distinct. In the graph setting, we may use a tie-breaking rule to ensure that distances are unique.

Each step that adds a new element to $S$ can be charged against a later pop operation and its associated match. Therefore, the number of repetitions is $O(n)$. This algorithm gives us the following theorem.

\begin{theorem}\label{the:chain}
The symmetric stable matching problem can be solved in $O(n)$ query and update 
operations of a dynamic nearest-neighbor data structure.
\end{theorem}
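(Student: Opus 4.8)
The plan is to separate the argument into a correctness part and a running-time part, treating correctness as the substance. For correctness I would show that Algorithm~\ref{alg:chain} is merely a concrete way of running the mutual closest pair algorithm, Algorithm~\ref{alg:mutual}: whenever Step~(4d) fires, the pair $(p,q)$ it matches really is a mutual closest pair of the current reduced instance, where ``current'' means the still-unmatched students together with the schools of positive remaining quota. Since the two nearest-neighbor structures contain exactly these present agents, and the present set only shrinks over the run (a student is deleted when matched, a school when its quota hits $0$, and a school with remaining quota is kept), it suffices to verify that $p$ is the nearest present agent to $q$ and vice versa. Given that, Theorem~\ref{thm:alg-mutual} together with the order-independence established by the uniqueness theorem yields that the output is the stable (indeed unique) matching, provided the run terminates with every student matched.

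The crux is the stack invariant flagged after the algorithm: reading $S$ from bottom to top as $a_1,a_2,\dots,a_t$, the consecutive distances $d(a_1,a_2) > d(a_2,a_3) > \cdots > d(a_{t-1},a_t)$ strictly decrease. This holds because each $a_{i+1}$ was pushed as the nearest present neighbor of $a_i$ at a moment when $a_{i-1}$, of the same type as $a_{i+1}$, was still present, and distinctness of distances (enforced by the tie-breaking rule in the graph setting) makes the inequality strict. I would then argue that when Step~(4b) finds the nearest neighbor $q$ of the top element $p=a_t$ already in $S$, it must be $a_{t-1}$: if instead $q=a_j$ with $j<t-1$, then on one hand $d(a_t,a_j) < d(a_t,a_{t-1})$ because $a_{t-1}$ was an available candidate for $p$'s query, while on the other hand $a_t$ was present (the present set only shrinks) and of the type opposite to $a_j$ when $a_{j+1}$ was pushed as $a_j$'s nearest neighbor, forcing $d(a_j,a_{j+1}) \le d(a_j,a_t)$; combined with monotonicity this gives $d(a_j,a_{j+1}) > d(a_{t-1},a_t) > d(a_t,a_j)$, a contradiction. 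Hence $q=a_{t-1}$, and since $a_t$ was itself pushed as the nearest neighbor of $a_{t-1}$ and both remain present, $(a_{t-1},a_t)$ are mutual nearest neighbors, i.e.\ a mutual closest pair of the reduced instance. Termination follows: along any chain the consecutive distances strictly decrease within a finite present set, so no agent can recur and the chain must reach such a reciprocal pair; each firing of Step~(4d) permanently matches one more student, so after $n$ matches every student is matched.

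For the running time I would use the standard nearest-neighbor-chain amortization. Each pass performs $O(1)$ stack operations, at most one nearest-neighbor query, and at most two deletions, so it suffices to bound the number of passes. Every pass either pushes one agent onto $S$ (Steps~(4a) or~(4c)) or executes one match (Step~(4d)); there are exactly $n$ matches, since the quotas sum to $n$ and each match consumes one student. Charging each push to the later pop of that agent --- a matched student is deleted and never re-enters, and a school is pushed only once and stays on the stack until its quota is exhausted --- bounds the number of pushes by $O(n+m)=O(n)$. Thus the algorithm runs in $O(n)$ passes and issues $O(n)$ query and update operations on the two dynamic nearest-neighbor structures, as claimed. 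The main obstacle is the stack-invariant argument of the second paragraph --- in particular making the ``only shrinks'' monotonicity of the present set interact correctly with the strict distance monotonicity --- whereas the counting in the last paragraph is routine.
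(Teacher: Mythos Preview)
Your proposal is correct and follows exactly the paper's (very terse) approach: the paper also justifies Step~(4d) by the monotone decrease of consecutive stack distances and obtains the $O(n)$ bound by charging each push against its later pop. You simply fill in the details the paper leaves implicit --- the explicit reduction of correctness to Algorithm~\ref{alg:mutual}, the full contradiction ruling out $q=a_j$ with $j<t-1$, and termination.

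One small inaccuracy worth flagging: the parenthetical claim that ``a school is pushed only once and stays on the stack until its quota is exhausted'' is not quite true. When the school happens to be the \emph{top} element $p$ in Step~(4d) (with the student $q=a_{t-1}$ beneath it), both are popped even if the school retains positive quota, and that school can later be re-pushed as some other student's nearest neighbor; the optimization in the algorithm only keeps the school when it is \emph{below} the student. This does not damage your bound, since the charging argument you actually invoke --- pushes $\le$ pops $\le 2\cdot(\text{matches})=2n$ --- is valid independently of that claim.
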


By combining Theorem~\ref{the:chain} with the data structure for planar graphs and road networks from~\cite{EPPSTEIN20173}, the stable graph matching problem can be solved in $O(n\sqrt{n}\log n)$ time.

\subsection{Circle-growing algorithm}

As we mentioned in the introduction, the Gale--Shapley algorithm requires $O(nk)$ time to find a stable matching between $n$ nodes and $k$ centers. However, in the graph setting first we need to compute the preferences, that is, the shortest-path distances between every center and node. These preferences can be computed by applying a single-source shortest-path algorithm starting from each center, such as Dijkstra's algorithm. The running time for applying Dijkstra's algorithm $k$ times on a graph with $m$ edges and $n$ vertices, using a Fibonacci heap based implementation of Dijkstra's algorithm, would be $O(k(m+n\log n))$, e.g., see~\cite{Cormen2001}. In planar graphs, this could be improved to $O(kn)$ by replacing Dijkstra's algorithm with the linear-time algorithm from Henzinger {\it et al.}~\cite{HENZINGER19973}. Therefore, the time for this preference computation step matches or dominates the time for performing the Gale--Shapley algorithm.

However, with the following alternative algorithm it is not necessary to compute the distances between all centers and nodes, and we can do without a separate Gale--Shapley phase of the algorithm altogether. Instead, we perform the assignment steps of the solution as part of $k$ instances of Dijkstra's algorithm, allowing us to stop each instance earlier once its quota is met.
 The algorithm is analogous to the \emph{circle-growing method}, a geometric algorithm for a continuous variant of stable grid matching described by Hoffman
{\it et al.}~\cite{hoffman2006}. It can be visualized as a process in which we grow circles from each center, all at the same speed, and match each node to the first circle that grows across it.

We start $k$ instances of Dijkstra's algorithm at the same time, one from each center. We explore, at each step, the next closest node to any of the centers, advancing one of the instances of Dijkstra's algorithm by a single step. We match each node to the center whose instance of Dijkstra's algorithm reaches it first. Note that when an instance of Dijkstra's algorithm, starting from center $c$, reaches a node $x$ that has not already been matched, then $c$ and $x$ must be the global closest pair (omitting already matched pairs). We halt each instance of Dijkstra's algorithm as soon as its center reaches its quota. This stopping condition prevents wasted work in which an instance of Dijkstra's algorithm explores nodes farther than its farthest matched node. In addition, using this method to solve symmetric stable matching problems allows us to avoid running the Gale--Shapley algorithm afterwards.

There are several alternatives for implementing the parallel instances of Dijkstra's algorithm, all of which result in a running time of $O(k(m+n\log n))$ for arbitrary graphs and $O(kn\log n)$ for planar graphs. For instance, we can use a priority queue of centers to decide which instance of Dijkstra's algorithm should advance in each step, or we can merge the priority queues of all the instances of Dijkstra's algorithm into a single larger priority queue, which is then implemented
using a Fibonacci heap.

\section{Experiments}
In this section we present an empirical comparison of the Gale--Shapley algorithm, circle-growing algorithm, and nearest-neighbor chain algorithm on real-world road network data.
Figure~\ref{fig:runtimeplots} and its associated Table~\ref{tab:DE} illustrate the main findings.

\begin{figure*}[t]
  \centering
  \includegraphics[width=0.85\linewidth]{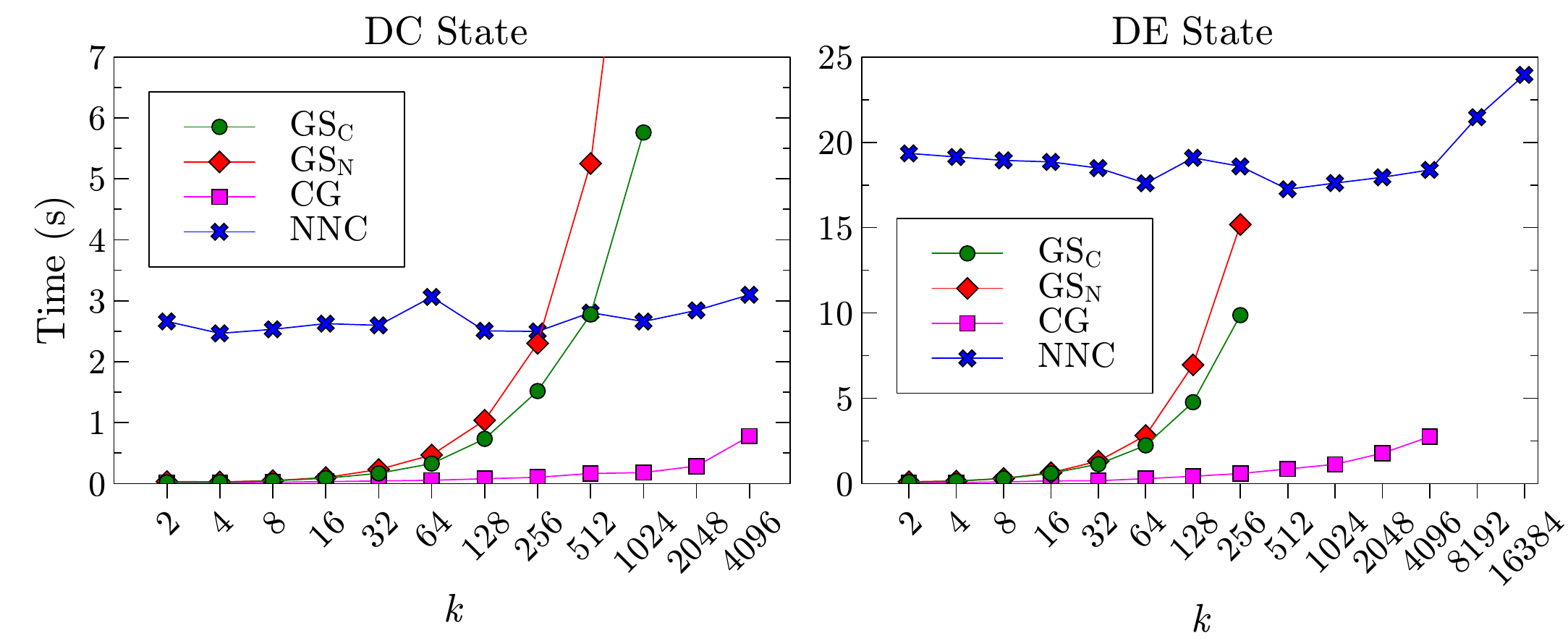} 
  \caption{Comparison of the running time of the algorithms in the Washington, DC (left, $n=9522,m=14850$) and Delaware (right, $n=48812,m=60027$) road networks from the DIMACS database~\cite{DIMACS} for a range of number of centers $k$ (in a logarithmic scale). Each data point is the average of 10 runs with 10 sets of random centers (the same sets for all the algorithms).}
  \label{fig:runtimeplots}
\end{figure*}

\subsection{Experiment setup}

We implemented the various symmetric stable matching algorithms of our comparison in Java~8.
We then executed them and timed them as run on an Intel Core CPU i7-3537U 2.00GHz with 4GB of RAM, under Windows 10.

In the table and figures presenting our experimental results, we use the label $CG$ for the circle-growing algorithm and $NNC$ for the nearest-neighbor chain algorithm. For Gale--Shapley, we consider a variation $GS_C$ where the centers do the proposals (which corresponds to the role of the men in the original algorithm), and the alternative $GS_N$ where the nodes do the proposals. For the nearest-neighbor chain algorithm, we implemented and used the dynamic nearest-neighbor data structure from~\cite{EPPSTEIN20173}.

\begin{table}[]
  \centering
  \caption{Runtime in seconds of the algorithms in the Delaware road network ($n=48812,m=60027$). Each data point is the average of 10 runs with 10 sets of random centers (the same sets for all the algorithms). A dash indicates that the algorithm ran out of memory.}
  \label{tab:DE}
\begin{tabular}{lllll}
  $k$    & $GS_N$ & $GS_C$ & $CG$   & $NNC$   \\
  2    & 0.11  & 0.09  & 0.06 & 19.35 \\
  4    & 0.15  & 0.15  & 0.06 & 19.14 \\
  8    & 0.30  & 0.30  & 0.10 & 18.94 \\
  16   & 0.64  & 0.60  & 0.16 & 18.85 \\
  32   & 1.32  & 1.14  & 0.17 & 18.49 \\
  64   & 2.82  & 2.24  & 0.29 & 17.60 \\
  128  & 6.96  & 4.77  & 0.43 & 19.09 \\
  256  & 15.18 & 9.87  & 0.59 & 18.59 \\
  512  & ---    & ---    & 0.86 & 17.25 \\
  1024 & ---    & ---    & 1.13 & 17.61 \\
  2048 & ---    & ---    & 1.78 & 17.95 \\
  4096 & ---    & ---    & 2.75 & 18.38 \\
  8192 & ---    & ---    & --- & 21.47 \\
  16384 & ---    & ---    & --- & 23.95 \\
\end{tabular}
\end{table}
  
\subsection{Results}

Figure~\ref{fig:runtimeplots} shows a clear picture of the respective algorithms' strengths and weaknesses:
\begin{itemize}
\item The Gale--Shapley algorithm, with a runtime of $O(k n\log n)$, scales linearly with $k$. Moreover, because of the memory requirement of $\Theta(nk)$, we could not run it with large numbers of centers. The version of Gale--Shapley where nodes propose ($GS_N$) was about $50\%$ slower than the version where centers propose. This is explained by the fact that, when nodes propose, each center needs to keep track of its least preferred already-matched node. This node may need to be rejected if the center receives a preferable proposition from another node. We maintain these least-preferred matched nodes by using a binary heap of nodes for each center; however, the overhead of maintaining this heap adds to the running time of our implementation.
In contrast, when centers propose, each node needs to keep track only of a single match, so we do not need to use an additional binary heap for this purpose.

\smallskip
\item Our circle-growing algorithm was the fastest of our implemented algorithms
in practice, over the range of values of $k$ for which we could run it. It is also the only algorithm that could complete a solution for the largest road networks that we tested. For instance, on the Texas road network, which has over 2 million nodes, the algorithm finishes in 3 seconds when given 6 random centers; our other implementations could not solve instances this large.
We did not see significant differences in the runtime between different ways to implement the parallel instances of Dijkstra's algorithm.

\smallskip
\item Additionally, in contrast to the Gale--Shapley algorithm, the runtime of circle-growing did not appear to be strongly affected by the value of~$k$. The reason for this is that, even though the algorithm runs $k$ instances of Dijkstra's algorithm, the expected number of nodes that each instance explores decreases as $k$ increases. However, this phenomenon may only be valid in expectation with randomly located centers.

\smallskip
\item Our nearest-neighbor chain algorithm, with a runtime of $O(n\sqrt{n}\log n)$, is the only one with a runtime independent of $k$. Hence, it has a flat curve in the plots\footnote{Even though the runtime of NNC seems to start to increase for the largest values of $k$ in the Delaware plot, this is likely to be due to an unrelated hardware/software issue, as other simulations of this comparison did not show this.}. The Gale--Shapley curve and the nearest-neighbor chain curve cross in our experimental data at around $k\approx 4\sqrt{n}$, showing that the constant factors in our implementation of the nearest-neighbor chain algorithm are reasonable. Moreover, because of its memory requirement of $O(n\sqrt{n})$, the nearest-neighbor chain algorithm is the only algorithm that was able to complete a solution for the entire range of values of $k$ on all inputs that were small enough for it to run at all. For instance, in the Delaware road network, the Gale--Shapley algorithm ran out of memory at $k=256$, and the circle-growing algorithm ran out of memory at $k=8192$, but the nearest-neighbor chain algorithm was unaffected by the choice of $k$.
\end{itemize}

\section{Conclusions}
We have defined the symmetric stable matching problem, a subfamily of stable matching problems which arise naturally when preferences are determined by distances. We studied its basic properties and provided the \emph{mutual closest pair algorithm}, which has the potential to be faster than the Gale--Shapley algorithm. Future researchers should consider the algorithms in this paper if they identify that a matching problem has symmetric preferences. As a special case of symmetric stable matching, we defined the stable graph matching problem. For this problem, we compared (a) the Gale--Shapley algorithm, (b) the mutual closest pair algorithm, and (c) the \emph{circle-growing} algorithm, a heuristic improvement over the Gale--Shapley algorithm.

This work leaves open several  questions for future research:
\begin{itemize}
\item We know of two settings where symmetric stable matching arises naturally: the geometric and graph-based cases. In what other settings does it arise?
\item The experiments show that the circle-growing algorithm scales much better with $k$ than the Gale--Shapley algorithm when centers are placed randomly. However, in the worst case both are $\theta(kn\log n)$. For instance, the circle-growing algorithm achieves this worst-case behavior when the graph is just a path and the $k$ centers are located at the first $k$ nodes.
Can the circle-growing algorithm be shown to have a better expected complexity when centers are placed randomly?
\item As Figure~\ref{fig:samplemaps} illustrates, the regions given by the solution to an instance of the stable graph matching problem are not necessarily connected. In many applications, such as political districting~\cite{Ricca2008voronoi},  it is necessary to have connected regions. Because stable graph matching has a unique solution, achieving connectivity will require a relaxation of the stability property, or a better-optimized choice of the center locations. We leave these for future work.
\end{itemize}

\begin{acks}
This article reports on work supported by the
DARPA under agreement no.~AFRL FA8750-15-2-0092.
The views expressed are those of the authors and do not reflect the
official policy or position of the Department of Defense
or the U.S.~Government.
This work was also supported in part from NSF grants
1228639, 1526631,
1217322, 1618301, and 1616248.
\end{acks}

\newpage % fix "Fatal error occurred, no output PDF file produced!"
\bibliographystyle{ACM-Reference-Format}
\bibliography{biblio} 

% !TEX root = main.tex

\appendix

\section{Additional results on road networks}
In this appendix we visualize additional examples of stable graph matchings in real road networks. 

Figure~\ref{fig:samplemaps2arxiv} shows the results for 6 additional US states.
%Figure~\ref{fig:diffcenters} shows, for the same road network, how the region changes depending on the center location. 
Figure~\ref{fig:diffkarxiv} shows the same road network with different numbers of centers.

\begin{figure*}
\begin{tabular}{cc}
\includegraphics[width=0.6\linewidth]{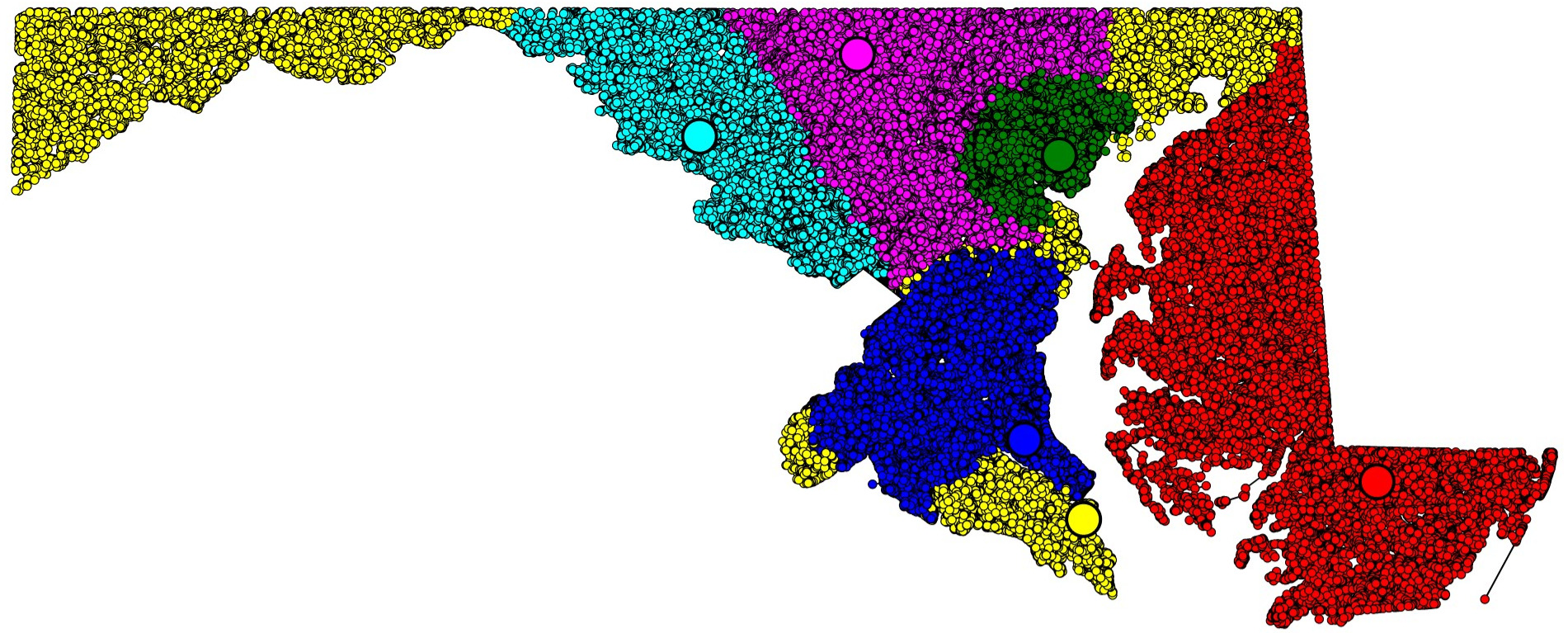} &   \includegraphics[width=0.4\linewidth]{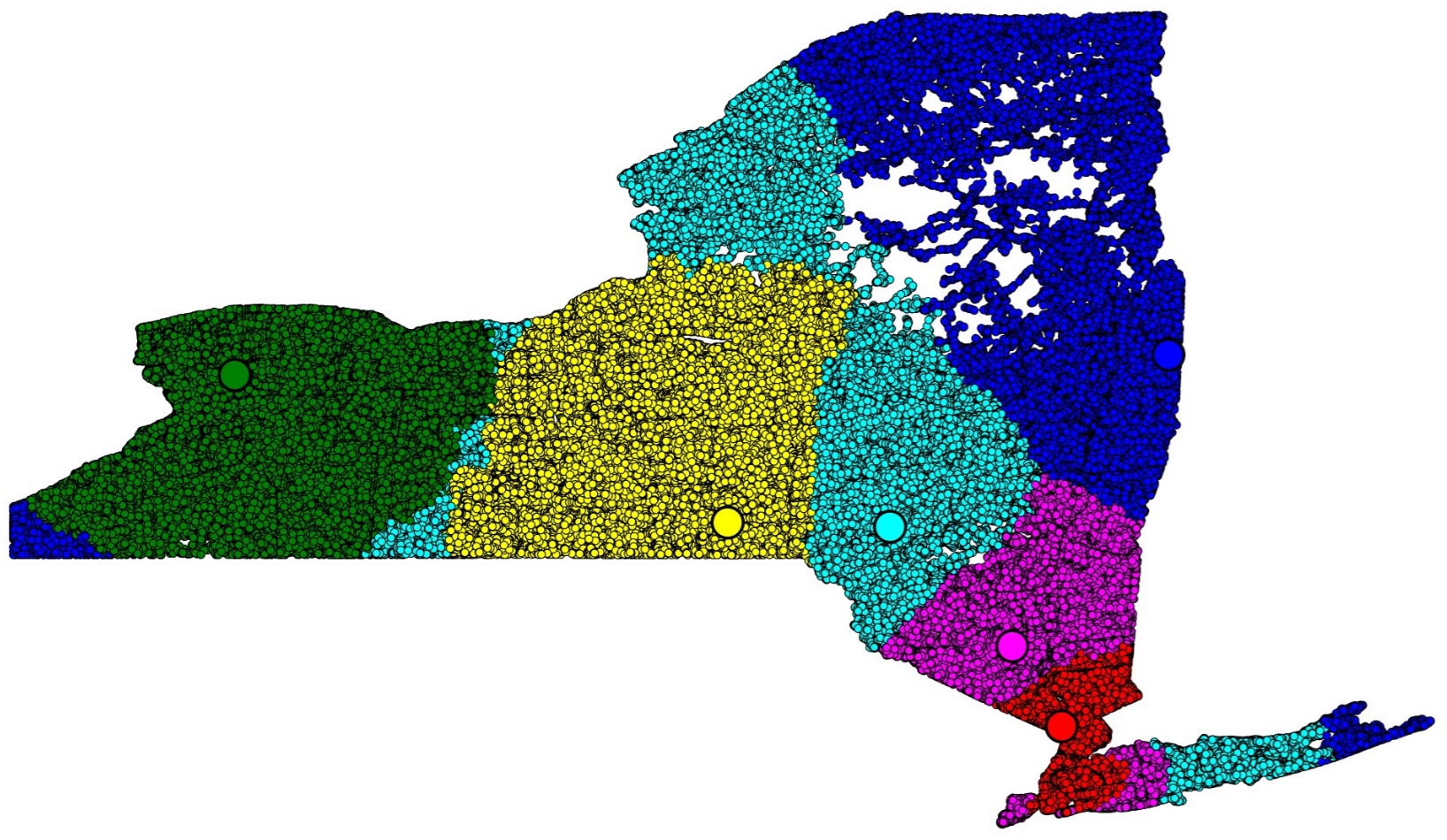} \\
Maryland ($n=264K,m=315K$) & New York ($n=709K, m=889K$)\\[6pt]
\includegraphics[width=0.45\linewidth]{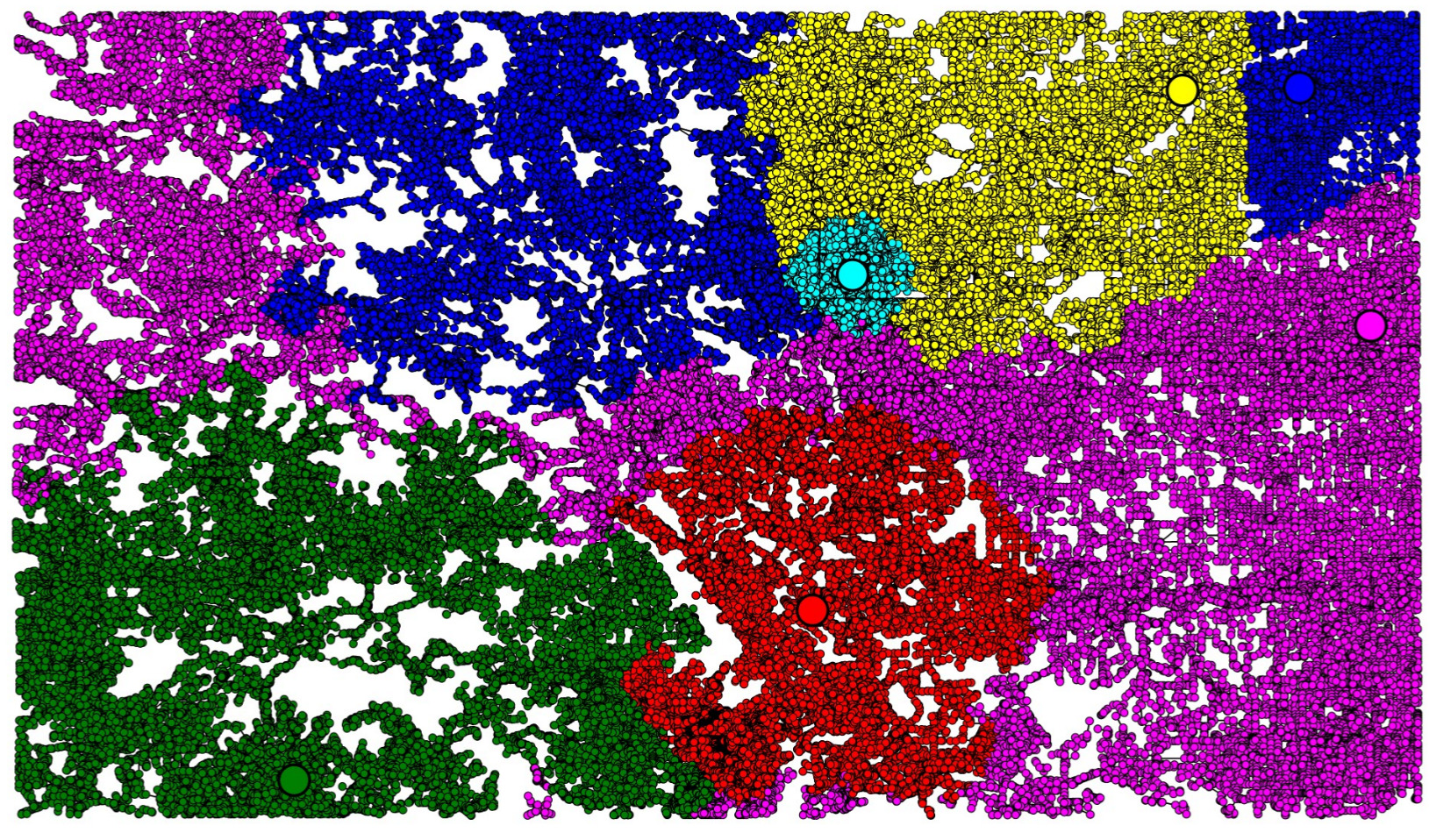} & \includegraphics[width=0.4\linewidth]{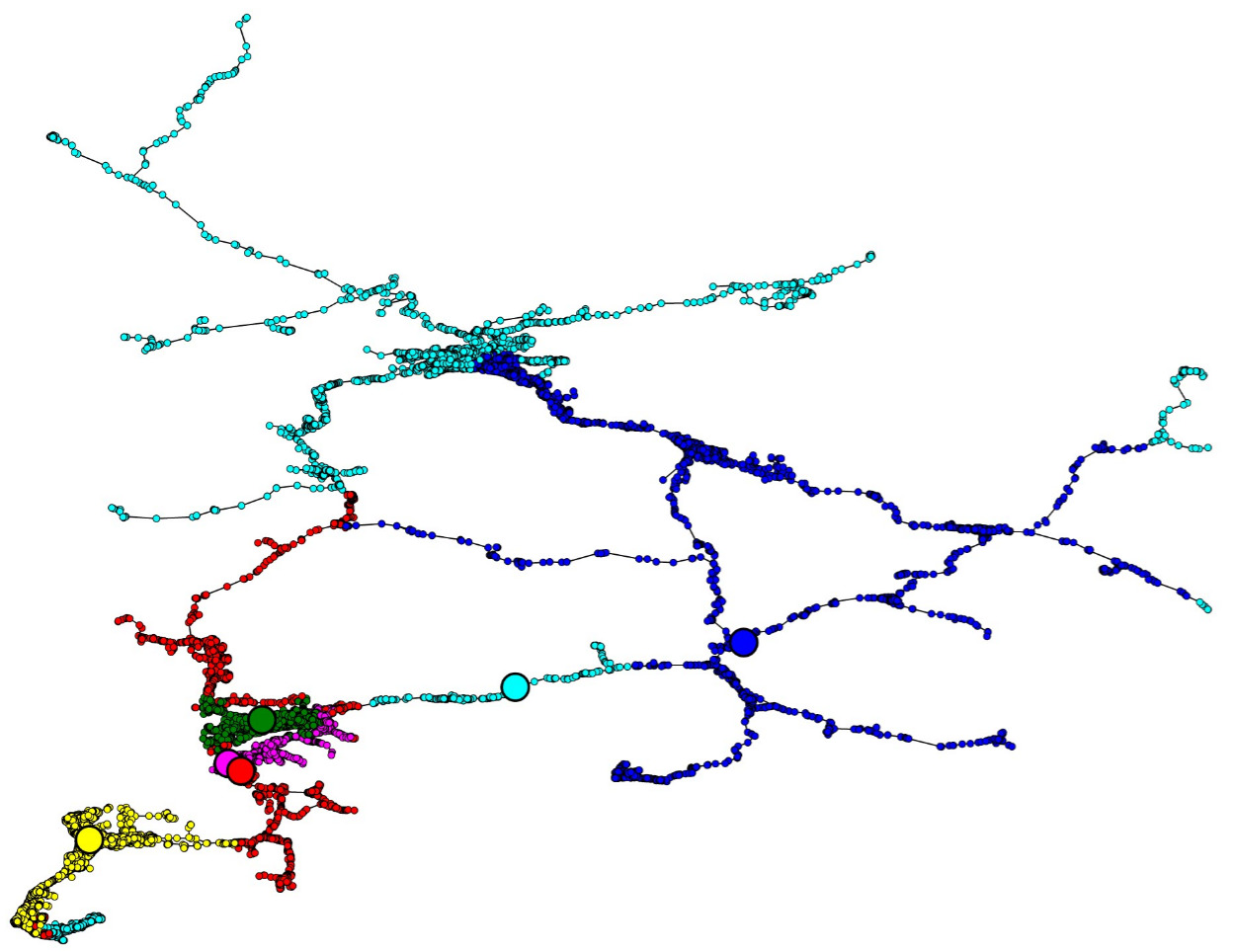} \\
Colorado ($n=436K,m=529K$) & Alaska ($n=49K, m=55K$) \\[6pt]
\includegraphics[width=0.45\linewidth]{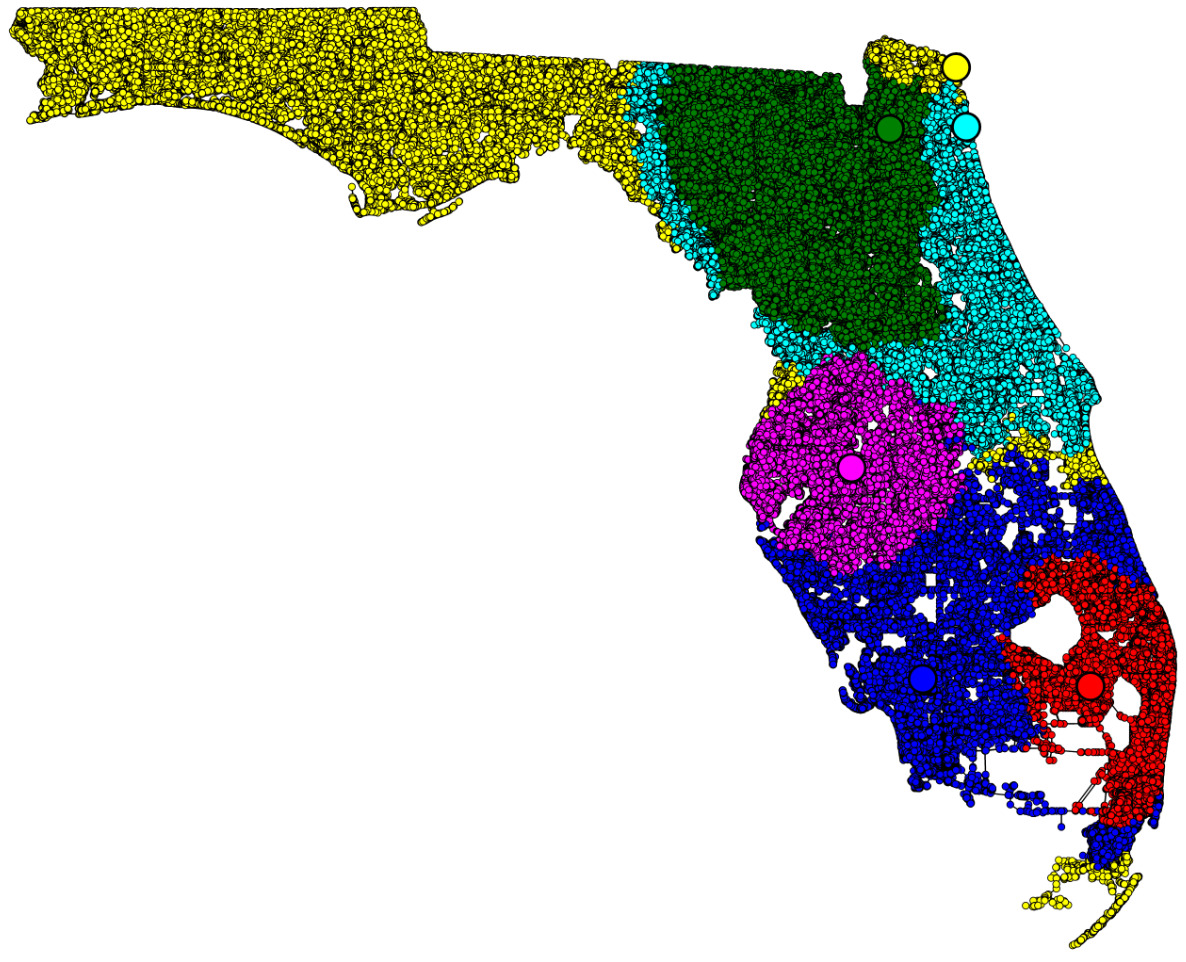} & \includegraphics[width=0.2\linewidth]{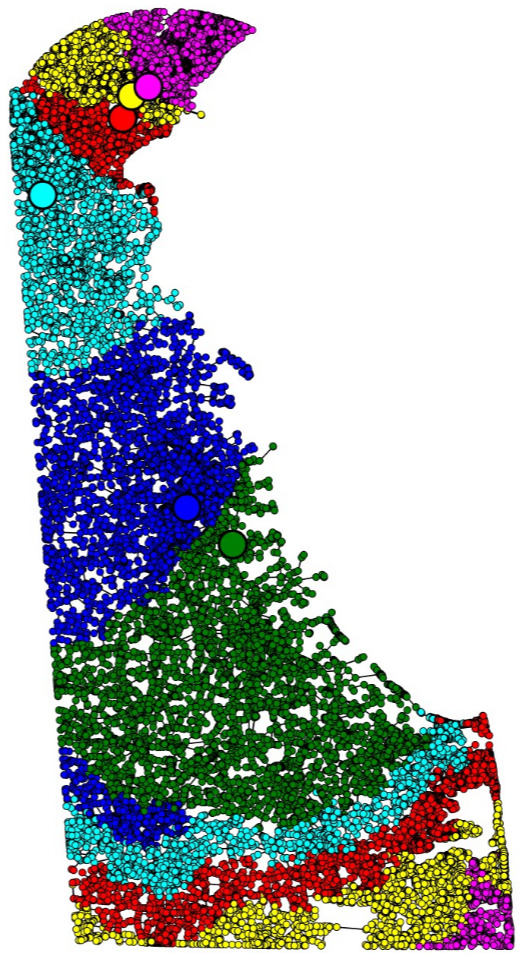} \\
Florida ($n=1037K,m=1315K$) & Delaware ($n=49K,m=60K$)
\end{tabular}
\caption{The solutions to 
the \emph{stable graph matching} problem for the 2010 road network of six US states, from the DIMACS database~\cite{DIMACS}. They consists of primary and secondary roads in the biggest connected component of the road networks. In each case, $n$ and $m$ denote the number of nodes and edges, respectively, and there are $k=6$ random centers with equal quota $n/k$.}
\label{fig:samplemaps2arxiv}
\end{figure*}

\begin{comment}
\begin{figure*}
\begin{tabular}{cc}
  \includegraphics[width=0.45\linewidth]{maps/LAv1} &   \includegraphics[width=0.45\linewidth]{maps/LAv2} \\[6pt]
  \includegraphics[width=0.45\linewidth]{maps/LAv3} &   \includegraphics[width=0.45\linewidth]{maps/LAv4} \\[6pt]
  \includegraphics[width=0.45\linewidth]{maps/LAv5} &   \includegraphics[width=0.45\linewidth]{maps/LAv6}
\end{tabular}
\caption{The solutions to 
the \emph{stable graph matching} problem for the 2010 road network of Louisiana from the DIMACS database~\cite{DIMACS} and different center locations. The road network consists of primary and secondary roads in the biggest connected component, for a total of $n=408K$ nodes and $m=494K$ edges. In each case, there are $k=6$ random centers with equal quota $n/k$.}
\label{fig:diffcenters}
\end{figure*}
\end{comment}

\begin{figure*}
\begin{tabular}{cc}
  \includegraphics[width=0.38\linewidth]{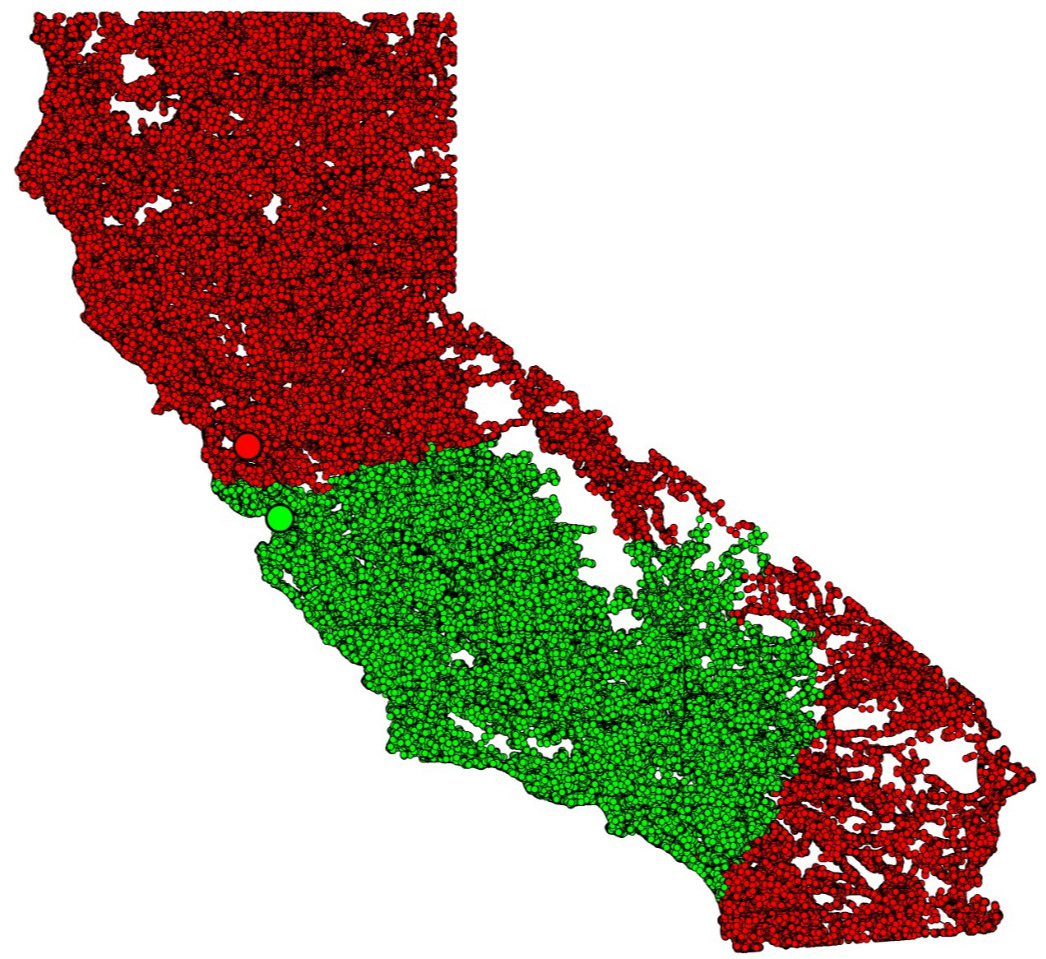} &   \includegraphics[width=0.38\linewidth]{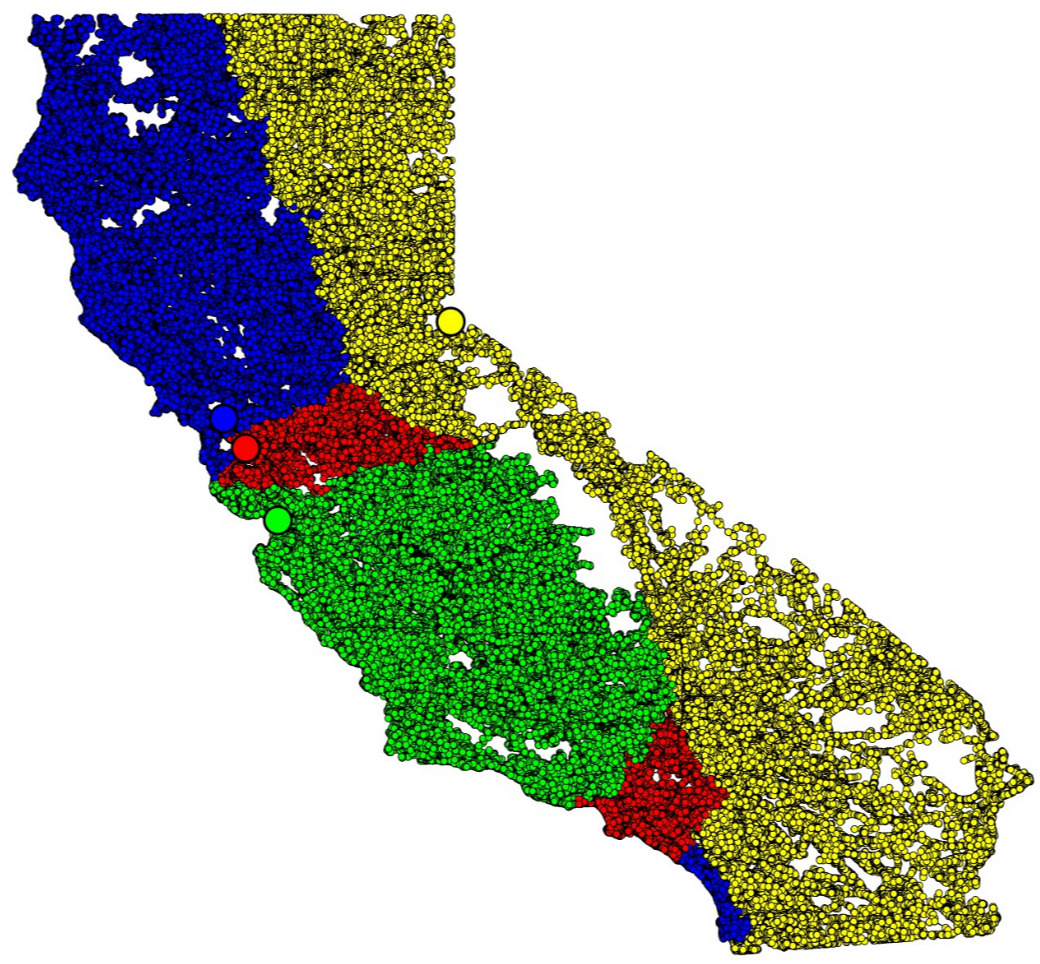} \\    
  $k=2$ & $k=4$ \\[6pt]
  \includegraphics[width=0.38\linewidth]{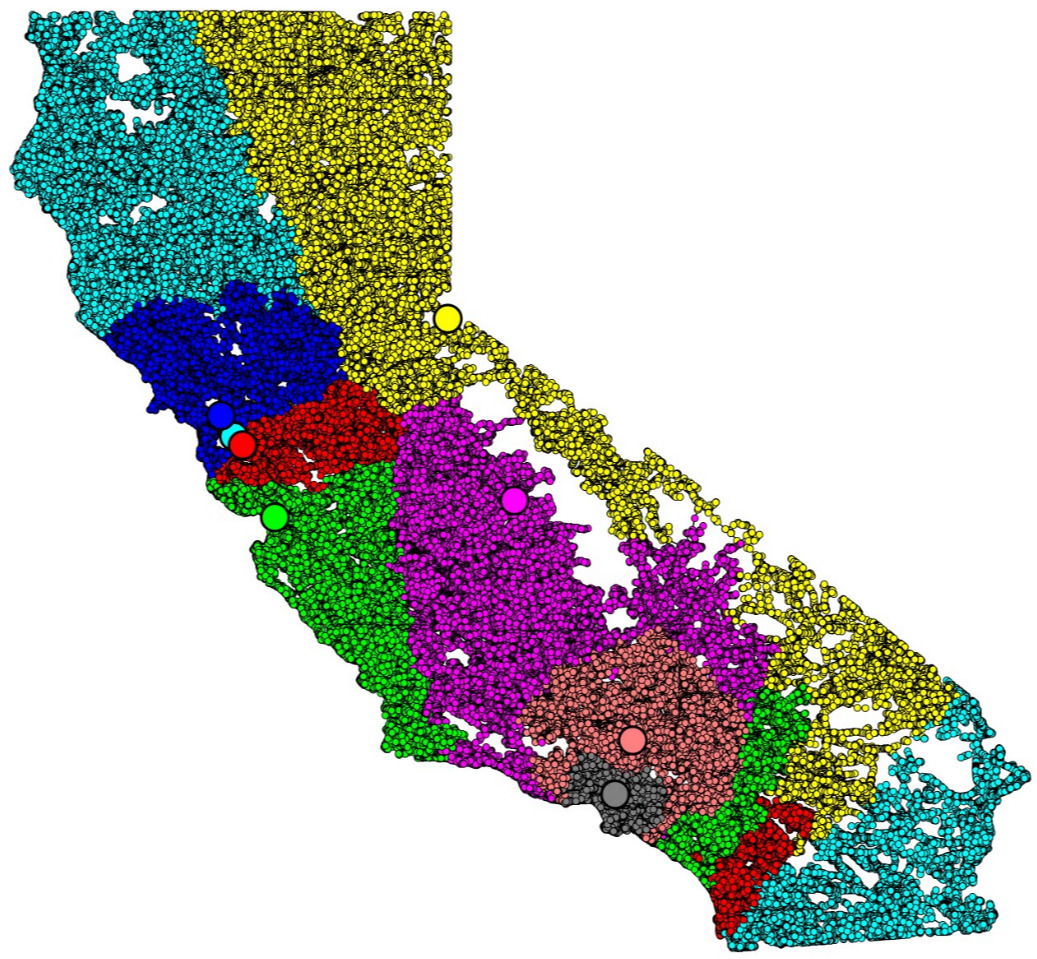} &   \includegraphics[width=0.38\linewidth]{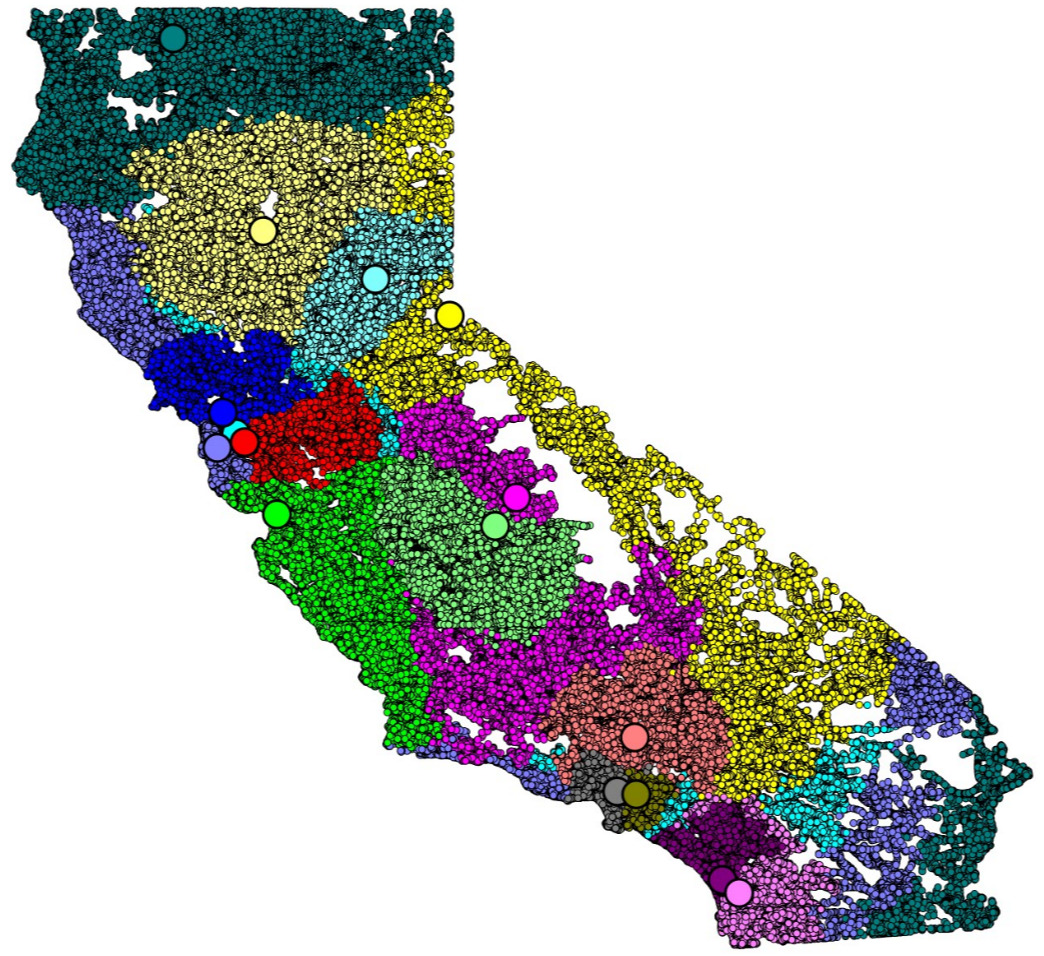} \\    
    $k=8$ & $k=16$ \\[6pt]
 \includegraphics[width=0.38\linewidth]{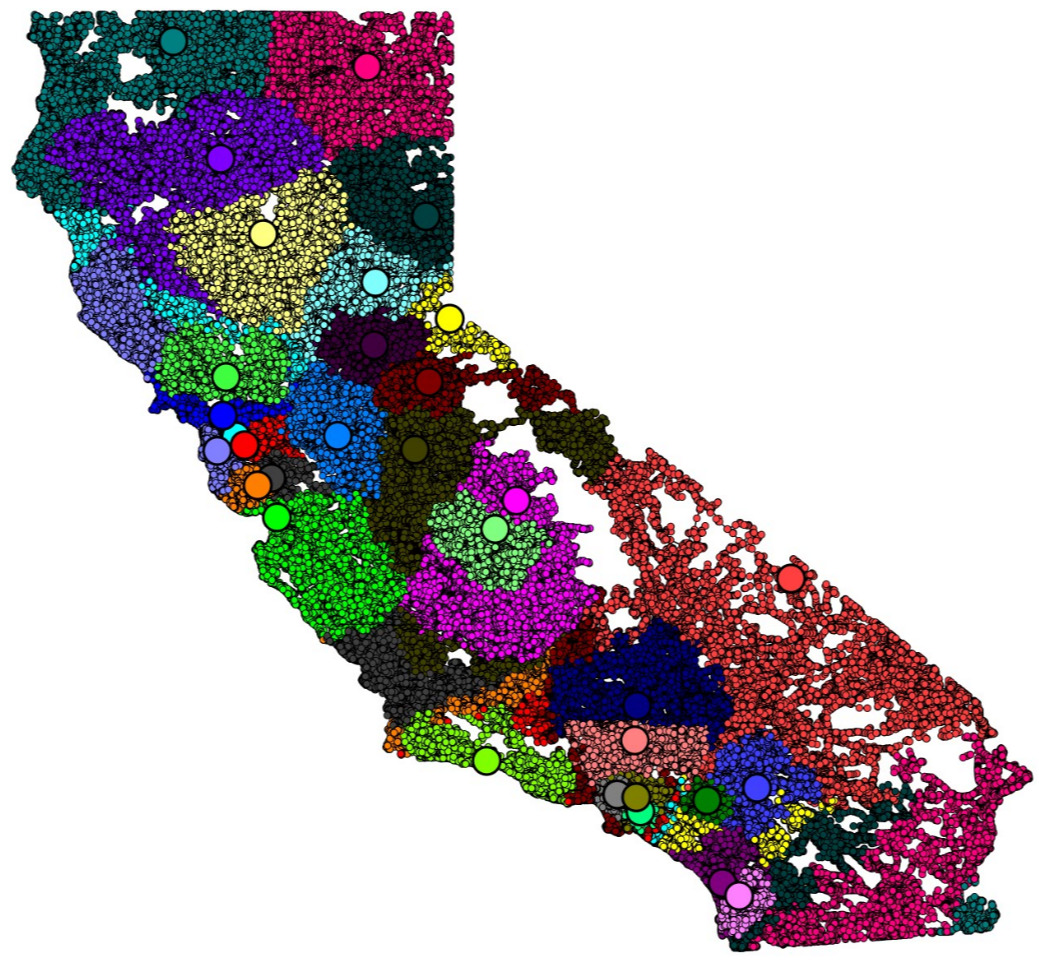} &   \includegraphics[width=0.38\linewidth]{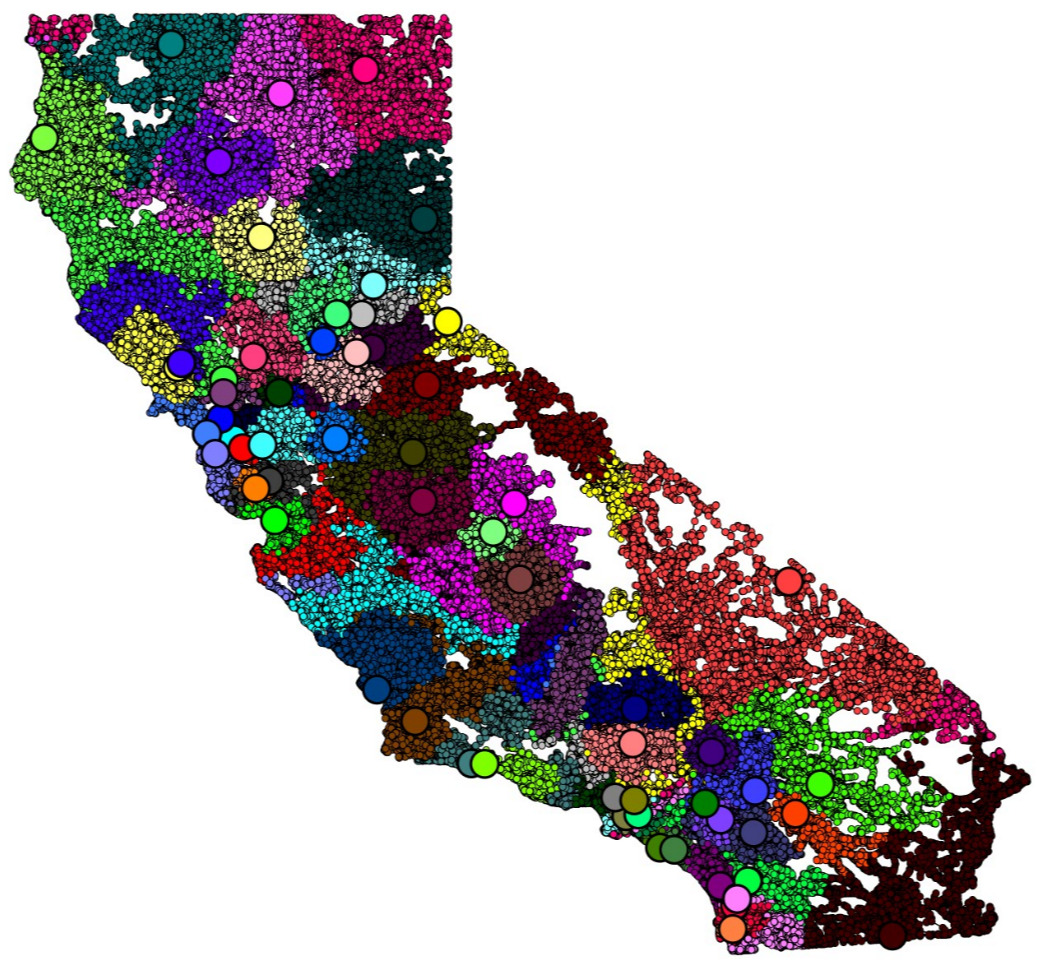}\\    
   $k=32$ & $k=64$ 
\end{tabular}
\caption{The solutions to 
the \emph{stable graph matching} problem for the 2010 road network of California from the DIMACS database~\cite{DIMACS} and different numbers of centers. The road network consists of primary and secondary roads in the biggest connected component, for a total of $n=1596K$ nodes and $m=1971K$ edges. The centers have been chosen randomly, and in each case they have equal quota $n/k$.}
\label{fig:diffkarxiv}
\end{figure*}

\end{document}